\documentclass[conference,a4paper]{IEEEtran}
\usepackage{enumerate}
\usepackage[top=0.8in, bottom=1.5in, left=0.6in, right=0.6in]{geometry}
\usepackage{booktabs}
\usepackage{epstopdf}
\usepackage{cite}
\usepackage{amscd}
\usepackage{dsfont}
\usepackage{latexsym}
\usepackage{array}

\usepackage{tikz}\usetikzlibrary{calc}
\usepackage{tabularx}
\usepackage{epsfig}
\usepackage{graphicx,subfigure}
\usepackage{amsmath,mathrsfs}
\usepackage{amsthm}
\usepackage{amssymb}  
\usepackage{amsbsy}
\usepackage{color}
\usepackage{stfloats}
\usepackage{algorithm}
\usepackage{algpseudocode}
\usepackage{bm}
\usepackage{pifont}
\usepackage{accents}
\usepackage{multirow}

\theoremstyle{definition}
\newtheorem{rem}{Remark}
\theoremstyle{plain}
\newtheorem{proposition}{Proposition}

\ifCLASSINFOpdf
\else
\fi
\begin{document}
%
\title{Decoding Golay Codes and their Related Lattices: A PAC Code Perspective}

\author{
\IEEEauthorblockN{Yujun Ji$^1$, Ling Liu$^2$, Shanxiang Lyu$^3$, Chao Chen$^2$, Tao Dai$^1$, Baoming Bai$^2$}
\IEEEauthorblockA{$^1$College of Computer Science and Software Engineering, Shenzhen Univeristy, Shenzhen, China}
\IEEEauthorblockA{$^2$Guangzhou Institute of Technology, Xidian University, Guangzhou, China}
\IEEEauthorblockA{$^3$College of Cyber Security, Jinan University, Guangzhou, China}
\IEEEauthorblockA{liuling@xidian.edu.cn}
}


%


\maketitle

\begin{abstract}
In this work, we propose a decoding method of Golay codes from the perspective of Polarization Adjusted Convolutional (PAC) codes. By invoking Forney's cubing construction of Golay codes and their generators $G^*(8,7)/(8,4)$, we found different construction methods of Golay codes from PAC codes, which result in an efficient parallel list decoding algorithm with near-maximum likelihood performance. Compared with existing methods, our method can get rid of index permutation and codeword puncturing. Using the new decoding method, some related lattices, such as Leech lattice $\Lambda_{24}$ and its principal sublattice $H_{24}$, can be also decoded efficiently.
\end{abstract}


%
\IEEEpeerreviewmaketitle

\section{Introduction}

The Golay code is among the pioneering codes of classic error correction theory, tracing its origins to the late 1940s \cite{golay1949notes}. As an example of a perfect binary error-correcting code, the (23, 12, 7) Golay code features the notable capability of correcting up to three errors. The extended (24, 12, 8) Golay code can be constructed by appending a single-parity-check bit to the length-23 Golay code. This extended code is the unique (24, 12, 8) binary code; it is also an extended perfect doubly-even self-dual code with a code rate of 1/2. Due to its elegant structure and practical properties, the Golay code has garnered significant interest since its inception, leading to the proposal of various construction approaches over the decades \cite{peng2006construction}. A range of decoders have been developed for Golay codes, including algebraic decoders \cite{macwilliams1977theory} and soft-decision decoders \cite{sarangi2014efficient,adde2011near,sorger2000star}. 

The Leech lattice $\Lambda_{24}$ is a 24-dimensional lattice with several equivalent definitions \cite{conway1999sphere,lepowsky1982e8,niemeier1973definite,sloane1980note,sloane2003tables}. 
For example, it can be constructed by applying Construction B \cite{leech1971sphere,sloane1977binary} to the extended Golay code, followed by doubling the number of points. In \cite{forney1988coset}, Forney gave the cubing construction of the Golay code and the Leech lattice, using the partition chains (8, 7, 2)/(8, 4, 4)/(8, 1, 8) and $E_8/RE_8/2E_8$ respectively, where the $E_8$ is a Barnes-Wall lattice with dimension 8. 
The Leech lattice $\Lambda_{24}$ has a more remarkable status among lattices than the Golay code in binary codes. 
Due to their exceptional geometric properties, Leech lattices are generally used for modulation shaping and vector quantization. Therefore, the development of fast decoding algorithms for them is critical. 


Polar codes, introduced by Ar{\i}kan \cite{arikan2009channel}, are the first provably capacity-achieving codes with explicit construction and low encoding and decoding complexity.
Their capacity-achieving principle is channel polarization: this process converts a set of identical channels into synthesized channels that asymptotically tend to be either perfectly reliable or completely unreliable as code length approaches infinity.
In their original design, polar codes have lengths restricted to powers of two; other lengths can be achieved via puncturing or shortening \cite{bioglio2017low}. 
Polar codes with the original 2-by-2 kernel cannot achieve the optimal scaling law for finite block lengths due to the suboptimal polarization rate (i.e., $\beta=1/2$), which can be improved by adopting more complex kernels. 
Recently, Ar{\i}kan proposed the polarization-adjusted convolutional (PAC) codes \cite{arikan2019sequential} to improve the performance of polar codes in short-to-medium block lengths. 
PAC codes embed an outer convolutional coding block before the polar transform, a design intended to avoid capacity loss in synthesized channels. 
By employing sequential decoding \cite{arikan2019sequential} or list decoding \cite{yao2021list,rowshan2021polarization}, PAC codes demonstrate superior performance over polar codes for intermediate block lengths. 


In this paper, we propose an efficient parallel decoding method of Golay codes from the perspective of PAC codes. The Golay codes are constructed by Forney's cubing construction and generators $G^*(8,7)/(8,4)$ \cite{forney1988coset}. We find that the generator matrix of Golay codes can be constructed from several standard polar generator matrices and carefully designed convolutional matrices $T$, thereby enabling an efficient parallel decoding framework in Sect. \ref{sec: decodiong framework}. Compared to \cite{bioglio2018polar}, our method does not need index permutation and puncturing, and it is compatible with standard successive cancellation list (SCL) decoding of PAC codes. 
We also notice that in \cite{lin2020transformation}, Golay codes were converted to a pre-transformed polar codes using a permutation matrix, which was usually derived from heuristic searches. Our method is deterministic and it can achieve performance limit with smaller list, as shown in Sect. \ref{sec: simulations}. 
Furthermore, based on Forney's code formulas, our proposed decoding method for Golay codes can be extended to some classical lattices like $\Lambda_{24}$ and its principal sublattice $H_{24}$ in Sect. \ref{sec: two-lattices}. 
The decoding method can find its application to post quantum cryptography, as recent advances \cite{lyu2024lattice,liu2025lattice,liu2025compact} have employed small dimensional lattice codes for error correction.


The rest of the paper is organized as follows: Sect. II gives preliminaries of polar codes and PAC codes. The constructions of Golay codes from PAC codes are described in Sect. III. A novel decoding algorithm is also provided. In Sect. IV, we use the new decoding algortihm to decode some related lattices. The paper is concluded in Sect. V.

All random variables (RVs) are denoted by capital letters, and their realizations are denoted by the corresponding lowercase letters. $\left [ N \right ] $ denotes the set $\left \{ 1,2,\dots,N  \right \} $. We use the notation $x_{i}^j(i\le j)$ as a shorthand for the vector $(x_i,..., x_j)$, which is a realization of the RVs $X_{i}^j=(X_i,..., X_j)$. 
The vector ($x_i:i\in\mathcal{A}$) is denoted by $x_{\mathcal{A}}$. For a set $\mathcal{A}$, $\mathcal{A}^c$ denotes its complement and $|\mathcal{A}|$ its cardinality. The capacity of channel $W$ is denoted by $C(W)$. 

\section{Preliminaries of Polar Codes and PAC Codes}\label{sec:background}

\subsection{Polar Codes}

The generator matrix of polar codes is defined as 
\begin{align}
    G_N^{p}=F_2^{\otimes n}\times B_N,
\end{align}
where $N=2^n$, $F_2=\left[\begin{smallmatrix}1&0\\1&1\end{smallmatrix}\right]$, $\otimes$ denotes the Kronecker product, and $B_N$ is the bit-reverse permutation matrix \cite{arikan2009channel}. 
For polar encoding, $u_{\mathcal{A}^c}$ and $u_\mathcal{A}$ are combined into the vector $u_1^N = [u_{\mathcal{A}^c}, u_\mathcal{A}]$, where $u_\mathcal{A}$ represents the information bits and $u_{\mathcal{A}^c}$ represents the frozen bits. Then, the codeword is given by $x_1^N = u_1^N G_N^{p}$.

Let $W$ be a binary input memoryless symmetric channel (BMSC). 
For channel polarization, $N$ independent copies of a given $W$ are combined into a vector channel $W_N$, which is split into $N$ subchannels  $W_N^{(i)}$, $1 \le i \le N$. 
The codeword $x_1^N$ is transmitted over a vector channel $W_N$ with outputs $y_{1}^N$. 
As $N$ increases to infinity, almost all subchannels have capacity close to 0 or 1, and the proportion of the subchannels with capacity close to 1 approaches $C(W)$. For polar decoding, one can use a successive cancellation (SC) decoder to estimate  $\hat{u}_\mathcal{A}$, where
\begin{flalign}
    \hat{u}_i = 
    \begin{cases}
        0, & \text{if } i \in \mathcal{A}^c \\
        \arg\max_{u\in\{0,1\}} P(u\mid y_1^N,\hat{u}_1^{i-1}), & \text{if } i\in\mathcal{A}
    \end{cases} .
\end{flalign}
The SCL decoder is an enhancement of the SC decoder, which expands its single decoding path to $L$ paths. 
As $L$ increases, the decoding success probability improves. 

To construct polar codes with arbitrary blocklength $N$ (not restricted to powers of two), a multi-kernel polarization technique was proposed in \cite{gabry2017multi}. This technique employs multiple kernels of varying sizes in combination to perform polarization. 
Variations in kernel selection and combination methods can result in different generator matrices.

\subsection{PAC Codes}

The generator matrix of PAC codes 
\begin{align}
    G_N^{P} = TG_N^{p}
\end{align}
is obtained by combining the polar generator matrix $G_N^{p}$ with an upper triangular convolutional matrix 
\begin{align*}
    T=\begin{bmatrix}
 c_1 & \cdots  & c_t & 0 & \cdots &0 \\
 0 & c_1 & \cdots & c_t & \ddots  & \vdots \\
 \vdots & \ddots & \ddots & \cdots & \ddots & 0\\
  \vdots&  & \ddots & \ddots & \cdots & c_t\\
 \vdots &  &  & \ddots & \ddots & \vdots\\
 0& \cdots &\cdots  & \cdots& 0 &c_1
\end{bmatrix},
\end{align*}
constructed from a convolutional vector 
$c_1^t$ where $c_1=1$. 
In fact, a more general form of $T$ can be used, where $c_1^t$ can be different for different rows. 
For PAC encoding, now the index partition is performed on $V^{[N]}$ instead of $U^{[N]}$, i.e., $v_1^N = [v_{\mathcal{A}^c}, v_\mathcal{A}]$. The codeword is given by $x_1^N = v_1^N G_N^{P}$. 


Initially, Ar{\i}kan adopted sequential decoding for PAC codes, which achieved excellent performance \cite{arikan2019sequential}. Subsequent studies have confirmed that SCL decoders can produce comparable performance when the list size is set to a sufficiently high value \cite{rowshan2021polarization}.



\section{Constructing Golay Codes from PAC Codes}

The construction of Golay codes follows \cite{peng2006construction}, which is an extension of the classical approach proposed in \cite{macwilliams1977theory}. 
The code is formed from a (3, 2, 2) single-parity-check code and a (3, 1, 3) repetition code (generator matrices $S$, $R$) as well as two (8, 4, 4) linear block codes with generator matrices $G_8$ and $G_8^{'}$. In fact, if all weight-4 codewords of $G_8$ and $G_8^{'}$ are distinct, the code generated by the matrix
\begin{align}
    \hat{G}=\begin{bmatrix}
S \otimes G_8 \\
R\otimes G_8^{'}
\end{bmatrix}
=\begin{bmatrix}
 G_8 & 0 & G_8\\
0  &G_8  & G_8\\
G_8^{'}  &G_8^{'}  &G_8^{'}
\end{bmatrix}\label{golay-construction}
\end{align}
is a (24, 12, 8) extended Golay code. 

\subsection{Motivation} \label{subsec: motivation}
The matrix $G_8$ is adopted as the generator matrix of the (8, 4, 4) Reed-Muller (RM) code, derived by freezing the input bits corresponding to rows of $F_8=F_2^{\otimes3}$ with Hamming weight smaller than 4 \cite{hadi2016enhancing}. 
We define $\pi $ as a permutation of the index set $[8]=\{1,2,\cdots,8\}$. The permutation can describe the relationship between the columns of the matrices $G_8$ and $G_8^{'}$, where their connection is given by:
\begin{align*}
    G_8^{'}(:, \pi) = G_8 \ \ \text{and} \ \  G_8(:, \pi^{-1}) = G_8^{'}.
\end{align*}
In \cite{bioglio2018polar}, the adopted permutation is $\pi_1 = [2\ 5\ 4\ 3\ 6\ 7\ 1\ 8]$, giving 
\begin{align}
    G_8^{'(1)}=\begin{bmatrix}
\mathbf{g}_1^{(1)} \\
\mathbf{g}_2^{(1)} \\
\mathbf{g}_3^{(1)} \\
\mathbf{g}_4^{(1)}
\end{bmatrix}
=\begin{bmatrix}
 0 & 1 &1  &1  & 1 & 0 & 0 &0 \\
 0 & 1 & 0 & 0 & 1 & 1 & 1 & 0\\
 1 & 1 & 0 & 1 & 0 & 1 & 0 & 0\\
 1 & 1 & 1 & 1 & 1 & 1 & 1 & 1
\end{bmatrix}.
\end{align}

The authors have not yet found a corresponding matrix $T$ to obtain $G_8^{'(1)}$ from $F_8$ with relatively short span, thus necessitating the introduction of index permutation and codeword puncturing. 
Since $\mathbf{g}_4^{(1)}$ is unaffected by the permutation, it will not be further discussed hereafter. 
In addition, $\pi_2 = [4\ 5\ 2\ 3\ 6\ 1\ 7\ 8]$ was employed in an earlier work by Forney \cite{forney1988coset}, with the generators adopted given by 
\begin{align*}
    \mathbf{g}_1^{(2)}&=[0\ 1 \ 1 \ 1 \ 1 \ 0 \ 0\ 0] \\
    \mathbf{g}_2^{(2)}&=[1\ 0 \ 0 \ 1 \ 1 \ 1 \ 0\ 0] \\
    \mathbf{g}_3^{(2)}&=[0\ 1 \ 0 \ 1 \ 0 \ 1 \ 1\ 0].
\end{align*}
We redefine a set of generators and illustrate how to derive it from the rows of $F_8$: 
\begin{align*}
    \mathbf{g}_1^{(3)}&=\mathbf{g}_1^{(2)} =\mathbf{f}_4 \oplus \mathbf{f}_5\\
    \mathbf{g}_2^{(3)}&=\mathbf{g}_2^{(2)}=\mathbf{f}_3 \oplus \mathbf{f}_4 \oplus \mathbf{f}_6  \\
    \mathbf{g}_3^{(3)}&=\mathbf{g}_2^{(2)} \oplus \mathbf{g}_3^{(2)}=\mathbf{f}_2 \oplus \mathbf{f}_3 \oplus \mathbf{f}_7,
\end{align*}
where $\mathbf{f}_i \ (i=1,2,\dots,8)$ denotes the $i$-th row vector of $F_8$, and $\mathbf{g}_3^{(3)}$ is the XOR of $\mathbf{g}_2^{(2)}$ and $\mathbf{g}_3^{(2)}$. 
The corresponding permutation for this set of generators is $\pi_3 = [5\ 4\ 2\ 3\ 1\ 6\ 7\ 8]$.

\begin{proposition}
For the generator matrix $G_8^{'(3)}$ derived from the permutation $\pi_3 = [5\ 4\ 2\ 3\ 1\ 6\ 7\ 8]$, there exists a corresponding PAC code. Its relevant parameters are given by
\begin{align*}
    \mathcal{A} = \{2,3,4,8\}, \quad G_{8}^{p}=F_8,
\end{align*}
and
\begin{align}
    T=\begin{bmatrix}
    1 & 0 & 0 & 0 & 0 & 0 & 0 & 0\\
    0 & 1 & 1 & 0 & 0 & 0 & 1 & 0\\
    0 & 0 & 1 & 1 & 0 & 1 & 0 & 0\\
    0 & 0 & 0 & 1 & 1 & 0 & 0 & 0\\
    0 & 0 & 0 & 0 & 1 & 0 & 0 & 0\\
    0 & 0 & 0 & 0 & 0 & 1 & 0 & 0\\
    0 & 0 & 0 & 0 & 0 & 0 & 1 & 0\\
    0 & 0 & 0 & 0 & 0 & 0 & 0 & 1
    \end{bmatrix}.
    \label{eq:matrix_T}
\end{align}
\label{prop:1}
\end{proposition}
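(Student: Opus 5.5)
The claim is a finite statement, so I will verify it by direct computation. Write $\mathbf{t}_i$ for the $i$-th row of $T$, so the $i$-th row of $G_8^{P}=TF_8$ is $\sum_j T_{ij}\mathbf{f}_j$. The PAC code with $\mathcal{A}=\{2,3,4,8\}$ is the row space of the rows of $TF_8$ indexed by $\mathcal{A}$. The target is the row space of $G_8^{'(3)}$, with rows $\mathbf{g}_1^{(3)},\mathbf{g}_2^{(3)},\mathbf{g}_3^{(3)}$ and the all-ones $\mathbf{g}_4$. I will show that these row spaces coincide; in fact the rows should match one by one.

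First I read off the selected rows of $T$: row 2 has ones in positions $2,3,7$, row 3 in $3,4,6$, row 4 in $4,5$, and row 8 only in position 8. Hence the selected rows of $TF_8$ are
\begin{align*}
\mathbf{f}_2\oplus\mathbf{f}_3\oplus\mathbf{f}_7,\quad \mathbf{f}_3\oplus\mathbf{f}_4\oplus\mathbf{f}_6,\quad \mathbf{f}_4\oplus\mathbf{f}_5,\quad \mathbf{f}_8 .
\end{align*}
These are exactly $\mathbf{g}_3^{(3)},\mathbf{g}_2^{(3)},\mathbf{g}_1^{(3)}$ as expressed in Sect.~\ref{subsec: motivation}, and $\mathbf{f}_8=\mathbf{1}=\mathbf{g}_4$ since $F_8$ is lower triangular with last row all ones. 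So $TF_8$ restricted to $\mathcal{A}$ generates the same code as $G_8^{'(3)}$, up to row order.

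The step that needs real care is that the identities $\mathbf{g}_i^{(3)}=\sum \mathbf{f}_j$ are asserted in the motivation subsection, so I must check them explicitly with
\begin{align*}
\mathbf{f}_2&=[1\,1\,0\,0\,0\,0\,0\,0], & \mathbf{f}_3&=[1\,0\,1\,0\,0\,0\,0\,0], & \mathbf{f}_4&=[1\,1\,1\,1\,0\,0\,0\,0],\\
\mathbf{f}_5&=[1\,0\,0\,0\,1\,0\,0\,0], & \mathbf{f}_6&=[1\,1\,0\,0\,1\,1\,0\,0], & \mathbf{f}_7&=[1\,0\,1\,0\,1\,0\,1\,0].
\end{align*}
The computations are:
\begin{align*}
\mathbf{f}_4\oplus\mathbf{f}_5&=[0\,1\,1\,1\,1\,0\,0\,0],\\
\mathbf{f}_3\oplus\mathbf{f}_4\oplus\mathbf{f}_6&=[1\,0\,0\,1\,1\,1\,0\,0],\\
\mathbf{f}_2\oplus\mathbf{f}_3\oplus\mathbf{f}_7&=[1\,1\,1\,0\,1\,0\,1\,0].
\end{align*}
The first two equal $\mathbf{g}_1^{(2)}$ and $\mathbf{g}_2^{(2)}$. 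For the third, $\mathbf{g}_2^{(2)}\oplus\mathbf{g}_3^{(2)}=[1\,1\,0\,0\,1\,0\,1\,0]$. This does not agree with $\mathbf{f}_2\oplus\mathbf{f}_3\oplus\mathbf{f}_7$, which has a 1 in position 3 where the other has a 0.

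I expect this mismatch to be the main obstacle. Either there is an indexing convention I have wrong (for instance whether $F_8$ includes $B_8$, or whether the permutation convention is $G_8^{'}(:,\pi)=G_8$), or the intended statement is about the code generated by $\{\mathbf{f}_4\oplus\mathbf{f}_5,\ \mathbf{f}_3\oplus\mathbf{f}_4\oplus\mathbf{f}_6,\ \mathbf{f}_2\oplus\mathbf{f}_3\oplus\mathbf{f}_7,\ \mathbf{1}\}$.

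I would resolve this by checking the defining property directly rather than the listed vectors. Let $C'$ be the span of the four rows of $TF_8$ above. I would verify three things:
\begin{enumerate}
\item The four rows are linearly independent, which is immediate because their leading $\mathbf{f}$-indices are distinct.
\item Every nonzero codeword of $C'$ has weight $4$ or $8$. Since each row is an $(8,4,4)$ RM codeword ($\mathbf{f}_j$ with $j\ge 4$ and the sums shown are RM codewords), $C'$ is a subcode of dimension 4 and hence equals a permuted RM code.
\item The permutation $\pi_3$ maps $G_8$'s columns to $C'$, meaning $C'(:,\pi_3)$ equals the row space of $G_8=[\mathbf{f}_4;\mathbf{f}_6;\mathbf{f}_7;\mathbf{f}_8]$.
\end{enumerate}
I will check item 3 by applying $\pi_3$ to each of the three nontrivial rows and reducing against $\mathbf{f}_4,\mathbf{f}_6,\mathbf{f}_7,\mathbf{f}_8$. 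If it holds, $G_8^{'(3)}$ and $TF_8$ restricted to $\mathcal{A}$ generate the same code and the proposition follows.

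Finally, I will confirm that $T$ is upper triangular with unit diagonal, so it is a valid (row-varying) PAC convolutional matrix as allowed in Sect.~\ref{sec:background}.
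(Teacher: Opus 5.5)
Your first half is exactly the paper's argument, but the ``mismatch'' you then report is an arithmetic slip, not a convention issue, and as written your proposal never finishes the proof. The paper also just reads off the rows of $T$: rows $2,3,4,8$ of $TF_8$ are $\mathbf{f}_2\oplus\mathbf{f}_3\oplus\mathbf{f}_7$, $\mathbf{f}_3\oplus\mathbf{f}_4\oplus\mathbf{f}_6$, $\mathbf{f}_4\oplus\mathbf{f}_5$ and $\mathbf{f}_8$. These should be $\mathbf{g}_3^{(3)},\mathbf{g}_2^{(3)},\mathbf{g}_1^{(3)},\mathbf{1}$. In coordinate $3$ the vectors $\mathbf{f}_2,\mathbf{f}_3,\mathbf{f}_7$ have entries $0,1,1$, so that coordinate of the sum is $0$. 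Hence $\mathbf{f}_2\oplus\mathbf{f}_3\oplus\mathbf{f}_7=[1\,1\,0\,0\,1\,0\,1\,0]=\mathbf{g}_2^{(2)}\oplus\mathbf{g}_3^{(2)}$, which is precisely the third row of $G_8'$. With that corrected, your row-by-row match is already a complete proof. Instead, your proposal concludes that the identity fails and defers to a fallback it never carries out. The convention you were worried about is also fine. With $G_8'(:,\pi_3)=G_8$, the three nontrivial rows of $G_8'$ map under $\pi_3$ exactly to $\mathbf{f}_4,\mathbf{f}_6,\mathbf{f}_7$, row by row.

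The fallback also contains a false step. In item 2 you assert that the three nontrivial rows of $TF_8$ are codewords of the RM code $\mathrm{span}\{\mathbf{f}_4,\mathbf{f}_6,\mathbf{f}_7,\mathbf{f}_8\}$. Because $F_8$ is invertible, a vector lies in that span if and only if its expansion in the rows of $F_8$ uses only indices $4,6,7,8$. Each of your rows involves $\mathbf{f}_2$, $\mathbf{f}_3$ or $\mathbf{f}_5$, so none is an RM codeword; for example, $[0\,1\,1\,1\,1\,0\,0\,0]$ is not. This has to be the case. If those rows were RM codewords, $C'$ would be the RM code itself rather than a permuted copy. The Golay construction needs $G_8$ and $G_8'$ to share no weight-$4$ codeword, so it would then break down. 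Item 3, applying $\pi_3$ and comparing with $G_8$, is the right check and does hold, and it makes item 2 unnecessary.
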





\begin{rem}
    In fact, for any feasible permutation $\pi$, if the corresponding PAC codes (i.e., the corresponding parameters $\mathcal{A}$ and $T$) can be obtained, then our framework is effective. 
    Furthermore, the parity-check constraint on $T$ proposed in Proposition \ref{prop:1} is relatively simple, which makes its performance analysis convenient in future work.
\end{rem}

Motivated by Proposition \ref{prop:1}, we found three different construction methods for Golay codes derived from PAC codes, without index permutation and puncturing.

\subsection{Three Methods based on Different Kernels}
\subsubsection{$3\times3$ Kernel 1} Consider Forney's cube kernel 
\begin{align}
    F_3^{(1)}=\begin{bmatrix}
  1& 1 & 0\\
 1 & 0 & 1\\
 1 &  1&1
\end{bmatrix},
\end{align}
and then 
\begin{align*}
    G_{24}^{p,(1)} = F_3^{(1)}\otimes F_8=\begin{bmatrix}
  F_8& F_8 & 0\\
 F_8 & 0 & F_8\\
 F_8 &  F_8&F_8
\end{bmatrix}.
\end{align*}
It can be directly verified that the minimum distance of $F_3^{(1)}\otimes G_8$ is only 4, which is not optimal. 
To get a (24, 12, 8) extended Golay code, we should contruct a matrix
\begin{align}
    \hat{G}_{24}=\begin{bmatrix}
 G_8 & 0 & G_8\\
0  &G_8  & G_8\\
G_8^{'}  &G_8^{'}  &G_8^{'}
\end{bmatrix}\label{eq:G24},
\end{align}
where $G_8$ is a (8, 4, 4) linear block code and $G_8^{'}$ is a permutated version of $G_8$. 
In subsequent discussions, $G_8$ denotes the matrix introduced in Section \ref{subsec: motivation}, and $G_8^{'}$ denotes $G_8^{'(3)}$. Both matrices are given as follows: 
\begin{align}
    G_8=\begin{bmatrix}
 1 &1  & 1 & 1 & 0 & 0 & 0 & 0\\
 1 & 1 & 0 & 0 & 1 & 1 & 0 &0 \\
 1 & 0 & 1 & 0 &1  & 0 & 1 &0 \\
 1 & 1 &1  & 1 & 1 & 1 & 1 &1
\end{bmatrix},
\end{align}
\begin{align}
    G_8^{'}=\begin{bmatrix}
 0 &1  & 1 & 1 & 1 & 0 & 0 & 0\\
 1 & 0 & 0 & 1 & 1 & 1 & 0 &0 \\
 1 & 1 & 0 & 0 &1  & 0 & 1 &0 \\
 1 & 1 &1  & 1 & 1 & 1 & 1 &1
\end{bmatrix}.
\end{align}

\begin{proposition}
Eight-bit block-wise column permutation of $\hat{G}_{24}$ does not change its codebook.
\label{prop:2}
\end{proposition}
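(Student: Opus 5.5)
The plan is to pass from the generator matrix $\hat{G}_{24}$ in \eqref{eq:G24} to an explicit description of its codebook, and then show that this description is symmetric in the three length-8 blocks. I read ``eight-bit block-wise column permutation'' as any permutation $\sigma$ of the three column blocks $(x_1,x_2,x_3)\mapsto(x_{\sigma(1)},x_{\sigma(2)},x_{\sigma(3)})$, where each block $x_i\in\mathbb{F}_2^8$ is kept intact. Since a linear code is invariant under a coordinate permutation if and only if its image equals itself, it suffices to check that every permuted codeword is again a codeword. Checking the two transpositions $(1\,2)$ and $(1\,3)$ is enough, because they generate the symmetric group on three blocks.

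First I will write down the codebook. Let $\mathcal{C}_8$ and $\mathcal{C}_8'$ be the row spaces of $G_8$ and $G_8^{'}$. A general codeword is $[u_1,u_2,u_3]\hat{G}_{24}$ with $u_i\in\mathbb{F}_2^4$. Setting $a=u_1G_8$, $b=u_2G_8$ and $c=u_3G_8^{'}$, this codeword equals
\begin{align*}
(a\oplus c,\; b\oplus c,\; a\oplus b\oplus c),\qquad a,b\in\mathcal{C}_8,\ c\in\mathcal{C}_8'.
\end{align*}
Equivalently, the codebook is the set of sums $(x_1,x_2,x_3)\oplus(c,c,c)$. Here $(x_1,x_2,x_3)$ ranges over $S\otimes G_8$, which is $\{(x_1,x_2,x_3): x_i\in\mathcal{C}_8,\ x_1\oplus x_2\oplus x_3=0\}$, and $(c,c,c)$ ranges over $R\otimes G_8^{'}$.

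The key step is to observe that both summands are invariant under block permutations. The set $\{(x_1,x_2,x_3): x_i\in\mathcal{C}_8,\ \bigoplus_i x_i=0\}$ is defined by conditions that are symmetric in $i$. The repetition part $(c,c,c)$ is obviously unchanged by any $\sigma$. Since $\sigma$ acts linearly on $\mathbb{F}_2^{24}$, it maps the sum of the two subspaces onto the sum of their images, which are the same two subspaces.

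For concreteness I will also record the explicit reparametrization for the transposition $(1\,3)$. The permuted word $(a\oplus b\oplus c,\ b\oplus c,\ a\oplus c)$ equals the codeword with parameters $a'=a\oplus b$, $b'=b$, $c'=c$, and $a',b'\in\mathcal{C}_8$ by linearity. The transposition $(1\,2)$ is handled simply by swapping $a$ and $b$. There is no real obstacle here. The only point needing care is the interpretation of ``block-wise'': the statement must mean permuting whole blocks, not columns inside a block, since $G_8\neq G_8^{'}$ and an intra-block permutation would generally change the code. I will state this reading explicitly. I will also note that the argument uses no property of $G_8$ and $G_8^{'}$ beyond linearity, so it holds for the specific matrices given above.
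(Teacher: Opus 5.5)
Your proof is correct, but its main argument differs from the paper's.

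The paper handles a single representative permutation: the cyclic shift of column blocks that turns $\hat{G}_{24}$ into $\hat{G}_{24}^{'}$. For it, the paper writes down the explicit invertible change of message $[\mathbf{a}_1,\mathbf{a}_2,\mathbf{a}_3]\mapsto[\mathbf{a}_1+\mathbf{a}_2,\mathbf{a}_1,\mathbf{a}_3]$, which sends each codeword of $\hat{G}_{24}$ to the same word generated by $\hat{G}_{24}^{'}$. All other permutations are left to ``without loss of generality.''

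You instead describe the codebook intrinsically as $\{(x_1,x_2,x_3):x_i\in\mathcal{C}_8,\ x_1\oplus x_2\oplus x_3=0\}+\{(c,c,c):c\in\mathcal{C}_8'\}$. Both summands are visibly invariant under every permutation of the three blocks.

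Your route buys completeness and transparency. A single $3$-cycle generates only the cyclic subgroup of order three, so the paper's ``without loss of generality'' tacitly needs a transposition handled separately. That case is trivial: swapping the first two column blocks of $\hat{G}_{24}$ gives the same matrix as swapping its first two row blocks. Your symmetric description covers all six block permutations at once. It also makes plain that nothing beyond linearity of $G_8$ and $G_8^{'}$ is used.

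The paper's route is more computational but constructive, since it gives an explicit invertible map on message vectors. Your ``for concreteness'' reparametrization $a'=a\oplus b$ supplies the same kind of map for the transposition $(1\,3)$.
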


\begin{proof}
    Without loss of generality, assume that the matrix $\hat{G}_{24}^{'}$ is obtained by performing a block-wise column permutation on $\hat{G}_{24}$  in 8-bit column blocks, which is given by 
    \begin{align*}
        \hat{G}_{24}^{'}=\begin{bmatrix}
    0  & G_8 & G_8\\
    G_8  &G_8  & 0\\
    G_8^{'}  &G_8^{'}  &G_8^{'}
    \end{bmatrix}.
    \end{align*}
    For the coding scheme with $\hat{G}_{24}$ as the generator matrix, the codeword is calculated by
    \begin{align*}
        [\mathbf{x}_1 , \mathbf{x}_2, \mathbf{x}_3]=[\mathbf{a}_1 , \mathbf{a}_2, \mathbf{a}_3] \times \hat{G}_{24},
    \end{align*}
    where $\mathbf{x}_i$ and $\mathbf{a}_i \ (i=1,2,3)$ are 8-dimensional and 4-dimensional row vectors, respectively. Furthermore, 
    \begin{align*}
        \left\{
        \begin{array}{l}  
         \mathbf{x}_1=\mathbf{a}_1G_8+\mathbf{a}_3G_8^{'} \\
         \mathbf{x}_2=\mathbf{a}_2G_8+\mathbf{a}_3G_8^{'} \\
         \mathbf{x}_3=\mathbf{a}_1G_8+\mathbf{a}_2G_8+\mathbf{a}_3G_8^{'}
        \end{array}.
        \right.
    \end{align*}
    The codeword $[\mathbf{x}_1 , \mathbf{x}_2, \mathbf{x}_3]$ corresponds to the coding scheme that takes $\hat{G}_{24}^{'}$ as the generator matrix, satisfying
    \begin{align*}
        [\mathbf{x}_1 , \mathbf{x}_2, \mathbf{x}_3]=[\mathbf{a}_1+\mathbf{a}_2 , \mathbf{a}_1, \mathbf{a}_3] \times \hat{G}_{24}^{'}.
    \end{align*}
    Since the mapping between the uncoded sequences is invertible, it follows that the codebooks generated by $\hat{G}_{24}$ and $\hat{G}_{24}^{'}$ are identical.
\end{proof}

To construct Golay codes from PAC codes, which amounts to finding an information set $\mathcal{A}_1$ and an upper triangular matrix $T_1=(T_{i,j}^{(1)})_{1\le i,j \le 24}$ that satisfy the following equation:
\begin{align}
v_{\mathcal{A}_1}\times \small\begin{bmatrix}
  G_8& G_8 & 0\\
 G_8 & 0 & G_8\\
 G_8^{'} &  G_8^{'}&G_8^{'}
\end{bmatrix}=[v_{\mathcal{A}_1}, \mathbf{0}]\times T_1G_{24}^{p,(1)}.
\label{eq:kernel-1-golay_pac}
\end{align}
In conjunction with Proposition \ref{prop:1}, we can derive the form of $\mathcal{A}_1$ and $T_1$ as follows: 
\begin{equation*}
    \mathcal{A}_1 = \{4,6,7,8,12,14,15,16,18,19,20,24\},
\end{equation*}
\begin{align*}
    T_1 = \begin{bmatrix}
    I_8  & 0 & 0\\
    0  & I_8  & 0\\
     0 & 0 &T
    \end{bmatrix},
\end{align*}
where $I_8$ is the 8-order identity matrix, and $T$ is the matrix given in \eqref{eq:matrix_T}. 

\subsubsection{$3\times3$ Kernel 2} Consider kernel 
\begin{align}
    F_3^{(2)}=\begin{bmatrix}
  1& 0 & 0\\
 1 & 1 & 0\\
 0 &  1&1
\end{bmatrix},
\end{align}
and then
\begin{align*}
    G_{24}^{p,(2)} = F_3^{(2)}\otimes F_8=\begin{bmatrix}
  F_8& 0 & 0\\
 F_8 & F_8 & 0\\
 0 &  F_8&F_8
\end{bmatrix}.
\end{align*}
For another row-permutated version of $\hat{G}_{24}$, we have
\begin{align}
v_{\mathcal{A}_2}\times \small\begin{bmatrix}
  G_8'& G_8' & G_8'\\
 G_8 & G_8 &0 \\
 0 &  G_8&G_8
\end{bmatrix}=[v_{\mathcal{A}_2}, \mathbf{0}]\times T_2G_{24}^{p,(2)},
\label{eq:kernel-2-golay_pac}
\end{align}
and the corresponding results are
\begin{equation*}
    \mathcal{A}_2 = \{2,3,4,8,12,14,15,16,20,22,23,24\}
\end{equation*}
and
\begin{align*}
    T_2 = \begin{bmatrix}
    T  & 0 & T\\
    0  & I_8  & 0\\
     0 & 0 &I_8
    \end{bmatrix}.
\end{align*}
The selection of $\mathcal{A}_2$ exactly corresponds to the RM profile under $T_2G_{24}^{p,(2)}$.


\subsubsection{$3\times3$ Kernel 3} Consider kernel 
\begin{align}
    F_3^{(3)}=\begin{bmatrix}
  1& 0 & 0\\
 1 & 1 & 0\\
 1 &  1&1
\end{bmatrix},
\end{align}
and then
\begin{align*}
    G_{24}^{p,(3)} = F_3^{(3)}\otimes F_8=\begin{bmatrix}
  F_8& 0 & 0\\
 F_8 & F_8 & 0\\
 F_8 &  F_8&F_8
\end{bmatrix}.
\end{align*}
For another block-structured version of $\hat{G}_{24}$, we have
\begin{align}
v_{\mathcal{A}_3}\times \small\begin{bmatrix}
0 &  G_8&G_8 \\
G_8 & G_8 &0  \\
  G_8'& G_8' & G_8'
\end{bmatrix}=[v_{\mathcal{A}_3}, \mathbf{0}]\times T_3G_{24}^{p,(3)},
\label{eq:kernel-3-golay_pac}
\end{align}
and the corresponding results are
\begin{equation*}
    \mathcal{A}_3 = \{4,6,7,8,12,14,15,16,18,19,20,24\}
\end{equation*}
and
\begin{align*}
    T_3 = \begin{bmatrix}
    I_8  & 0 & I_8\\
    0  & I_8  & 0\\
     0 & 0 &T
    \end{bmatrix}.
\end{align*}



\subsection{A Combined Parallel Decoding Method}
\label{sec: decodiong framework}
To elaborate on the design principle of the combined parallel decoding method, we compare Eqs. \eqref{eq:kernel-1-golay_pac}, \eqref{eq:kernel-2-golay_pac} and \eqref{eq:kernel-3-golay_pac}. 
For the left hand sides, according to Proposition \ref{prop:2}, the corresponding Golay codes are identical. For the right hand sides, these Golay codes correspond to three PAC codes with distinct structures. 
Each of these structures is decoded by a dedicated SCL decoder; the three decoders operate in parallel, and the optimal path from their outputs is selected as the final decoding result.

The parallel decoding framework proposed herein is illustrated in Fig. \ref{fig:new-decoder}. The input corresponds to the log-likelihood ratio (LLR) sequence of the received sequence $y_1^{24}$:
\begin{itemize}
    \item Level-1 is shared among the three SCL decoders.  
    \item At Level-2, the three SCL decoders execute parallel decoding operations, relying on their respective kernels. 
    \item Level-3 undertakes the task of selecting the optimal path from the $3L$ candidate paths generated by Level-2. 
\end{itemize}


\begin{figure}[ht]
    \centering
    \includegraphics[width=7cm]{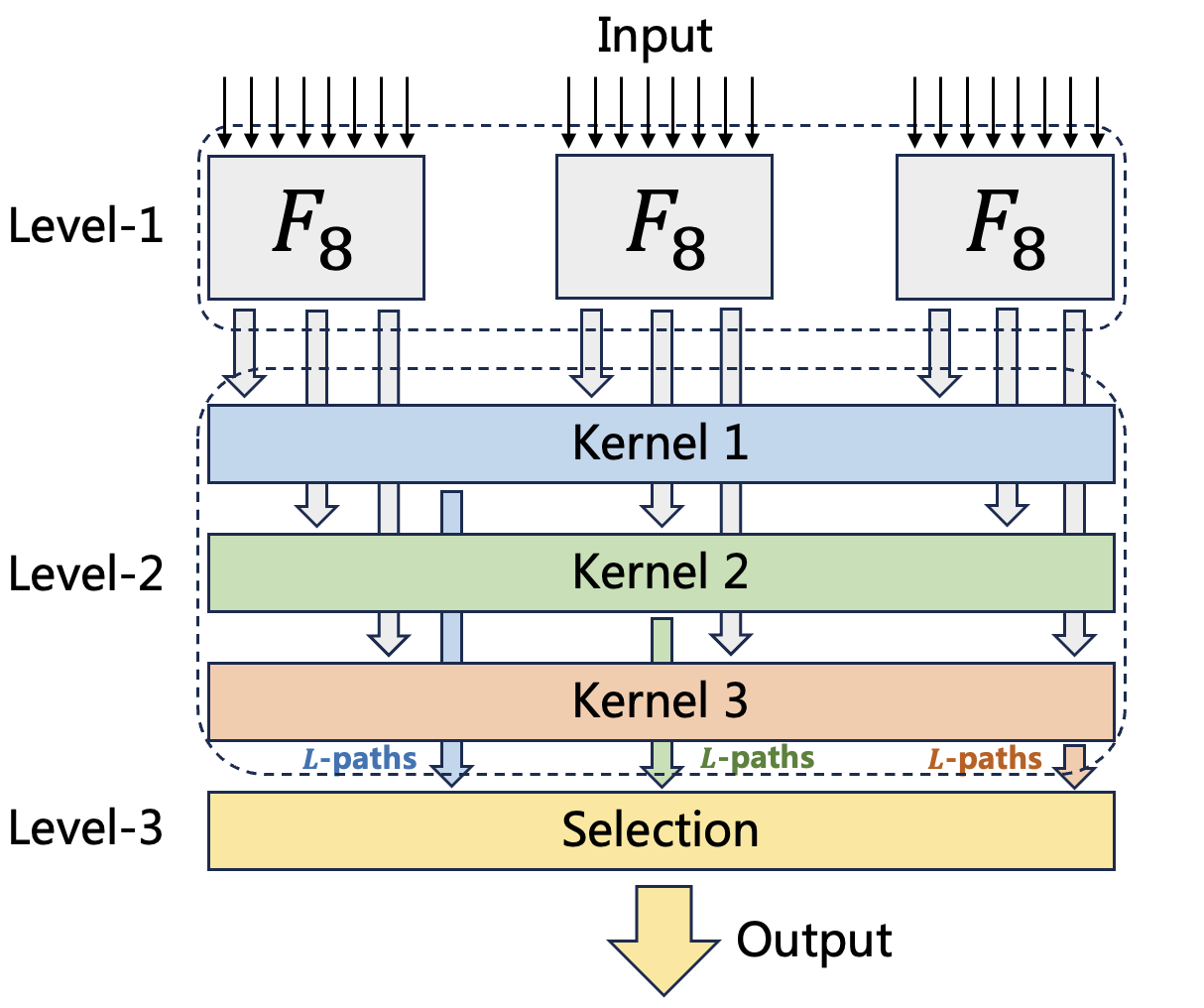}
    \caption{Parallel decoding framework of Golay codes.}
    \label{fig:new-decoder}
\end{figure}

\subsection{Simulation Results and Comparison}
\label{sec: simulations}

Fig. \ref{fig:bler_fig_K1_K2_K3} demonstrates the block error rate (BLER) performance of the three different PAC decoders based on the extended Golay code for transmission over an AWGN channel with BPSK modulation. 
In this figure, $K_i$ denotes the scheme employing the $i$-th 3×3 kernel. When the list size is small, the decoding error rate of $K_2$ is significantly higher than that of the other two kernels; 
this is because the convolutional span of $T_2$ under $K_2$ is the largest, thus imposing a higher demand for the list size. Moreover, there exist certain discrepancies in the performance of different $K_i$, which stems from the fact that distinct $T_i$ matrices exhibit varying degrees of coverage for different bit positions. For this reason, it is highly necessary to adopt a combined parallel decoding to cover all bit positions as comprehensively as possible. 
Once the list size exceeds 4, the three kernels achieve comparable performance.

Fig. \ref{fig:Bler_K1_Para_ML} demonstrates the BLER performance of the two distinct PAC code decoders and the maximum-likelihood (ML) decoder. When the list size is small, the proposed parallel decoding method achieves a noticeable performance gain; when the list size reaches 8, the proposed decoding algorithm attains near-maximum likelihood performance, which is much smaller than the list size of 64 required in \cite{lin2020transformation}; meanwhile, we no longer require the index permutation and puncturing operations introduced in \cite{bioglio2018polar}.

\begin{figure}[ht]
    \centering
    \includegraphics[width=7.5cm]{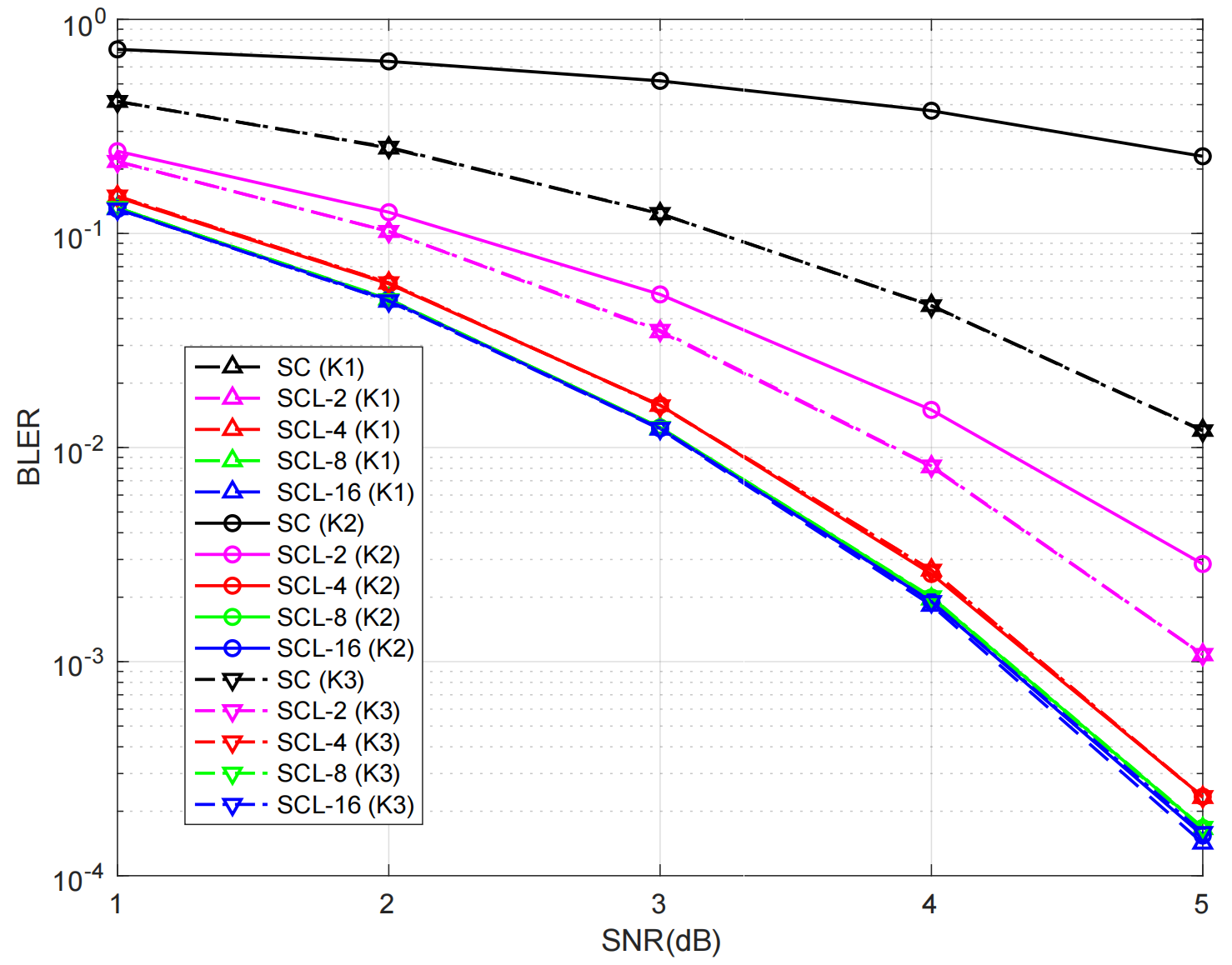}
    \caption{BLER performance of the (24, 12, 8) extended Golay code under SCL decoding with three kernels.}
    \label{fig:bler_fig_K1_K2_K3}
\end{figure}

\begin{figure}[ht]
    \centering
    \includegraphics[width=7.5cm]{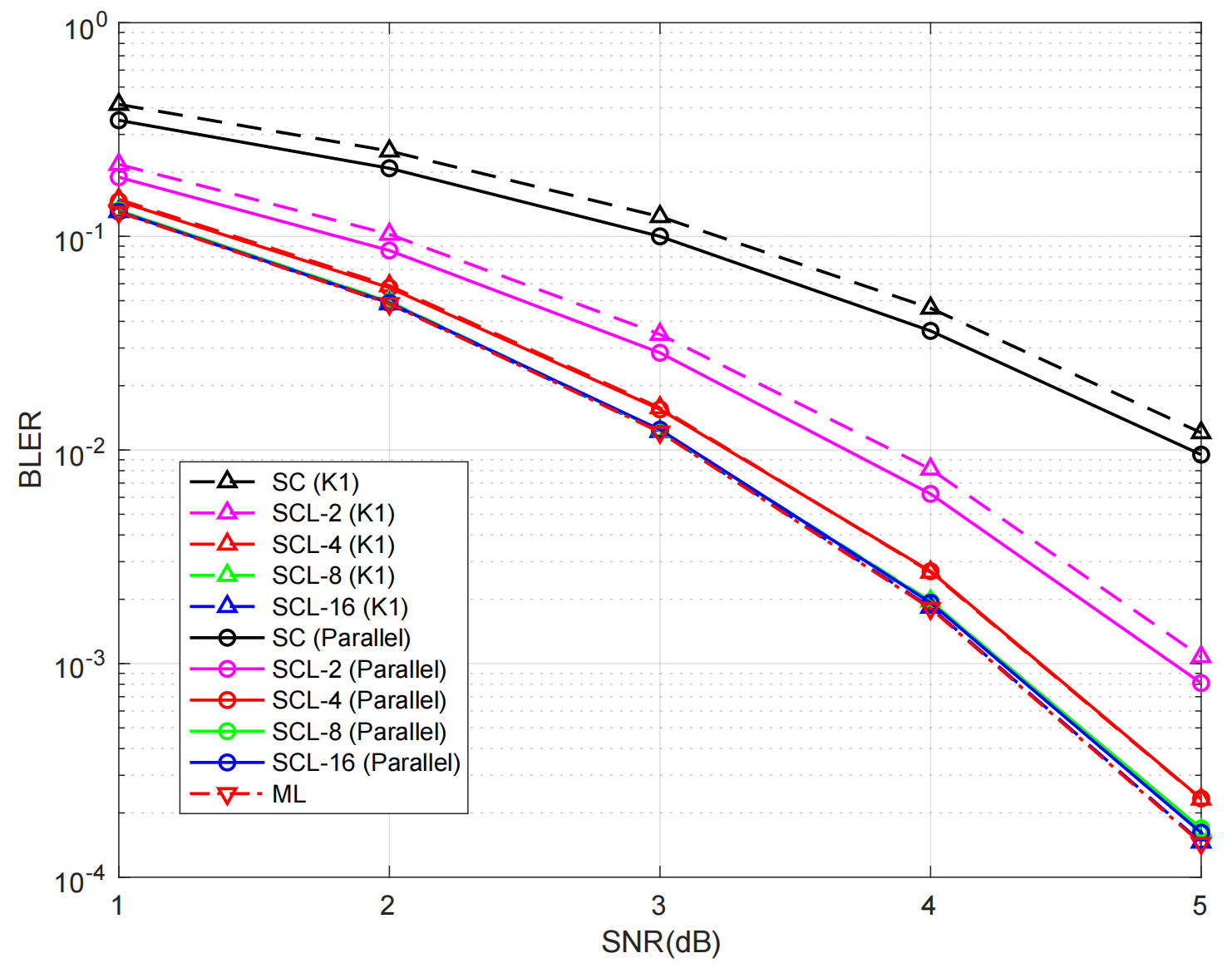}
    \caption{BLER performance of the (24, 12, 8) extended Golay code under SCL decoding with Kernel 1, parallel SCL decoding and ML decoding.}
    \label{fig:Bler_K1_Para_ML}
\end{figure}

\section{Multilevel Decoding of $\Lambda_{24}$ and $H_{24}$} \label{sec: two-lattices}

\subsection{Multilevel Decoding of Binary Lattices}

A lattice $\Lambda \subseteq \mathbb{R}^N$ is defined as a discrete additive subgroup of $\mathbb{R}^N$, which can be represented as
\begin{align}
    \Lambda = \{\boldsymbol{\lambda} = B\mathbf{x}=\sum_{i=1}^{N}x_i\mathbf{g}_i \mid\mathbf{x}\in\mathbb{Z}^N\},
\end{align}
where $B = [\mathbf{g}_1, \mathbf{g}_2,\dots,\mathbf{g}_N]$ denotes the generator matrix of $\Lambda$, $\mathbf{g}_i$ is an $N\times 1$ column vector, and these $N$ column vectors are linearly independent.


A lattice can be constructed from binary coset codes using Construction D. To do so, one needs a partition chain based on sublattices. The simplest is $\mathbb{Z}/2\mathbb{Z}/4\mathbb{Z}/\cdots$. More details for the Construction D lattices can be found in \cite{forney2000sphere, liu2018construction, liu2025revisit}.

We focus on the scenario where an $N$-dimensional Gaussian noise vector with zero mean and variance $\sigma^2$ per dimension falls outside the Voronoi region $\mathcal{V}(\Lambda)$ of a lattice $\Lambda$. Given the lattice $\Lambda$, we define a $\Lambda$-periodic function as
\begin{equation*}
    f_{\sigma,\Lambda}(\mathbf{x})=\sum_{\boldsymbol{\lambda} \in \Lambda}f_{\sigma,\boldsymbol{\lambda}}(\mathbf{x}) =\frac{1}{(\sqrt{2\pi}\sigma)^N} \sum_{\boldsymbol{\lambda} \in \Lambda} e^{-\frac{\left \| \mathbf{x}-\boldsymbol{\lambda} \right \|^2 }{2\sigma^2} }
\end{equation*}
for $\mathbf{x}\in\mathbb{R}^N$. This function represents the probability density function of Gaussian noise after a mod-$\mathcal{V}(\Lambda)$ operation.

It can be seen that Leech lattices and their principal sublattices are built by repetition codes, Golay codes and parity-check codes, the fast ML decoding algorithms of which are now available. It is thus promising to decode them separately using multilevel decoding while ensuring favorable performance. 
The framework for multilevel decoding is illustrated in Fig. \ref{fig:multilevel-decoding}. Specifically, the received signals are subjected to a modulo-2 operation at the front end of each decoder. At the output end, these signals are subtracted by the recovered codeword and then scaled down by a factor of two, before being forwarded to the subsequent decoding level. It is worth noting that the codeword corresponding to each level is transmitted over a mod-$\Lambda$ channel, which is essentially an additive white Gaussian noise (AWGN) channel integrated with a mod-$\mathcal{V}(\Lambda)$ operator at the receiver front end \cite{forney2000sphere}.

\begin{figure}[ht]
    \centering
    \includegraphics[width=8cm]{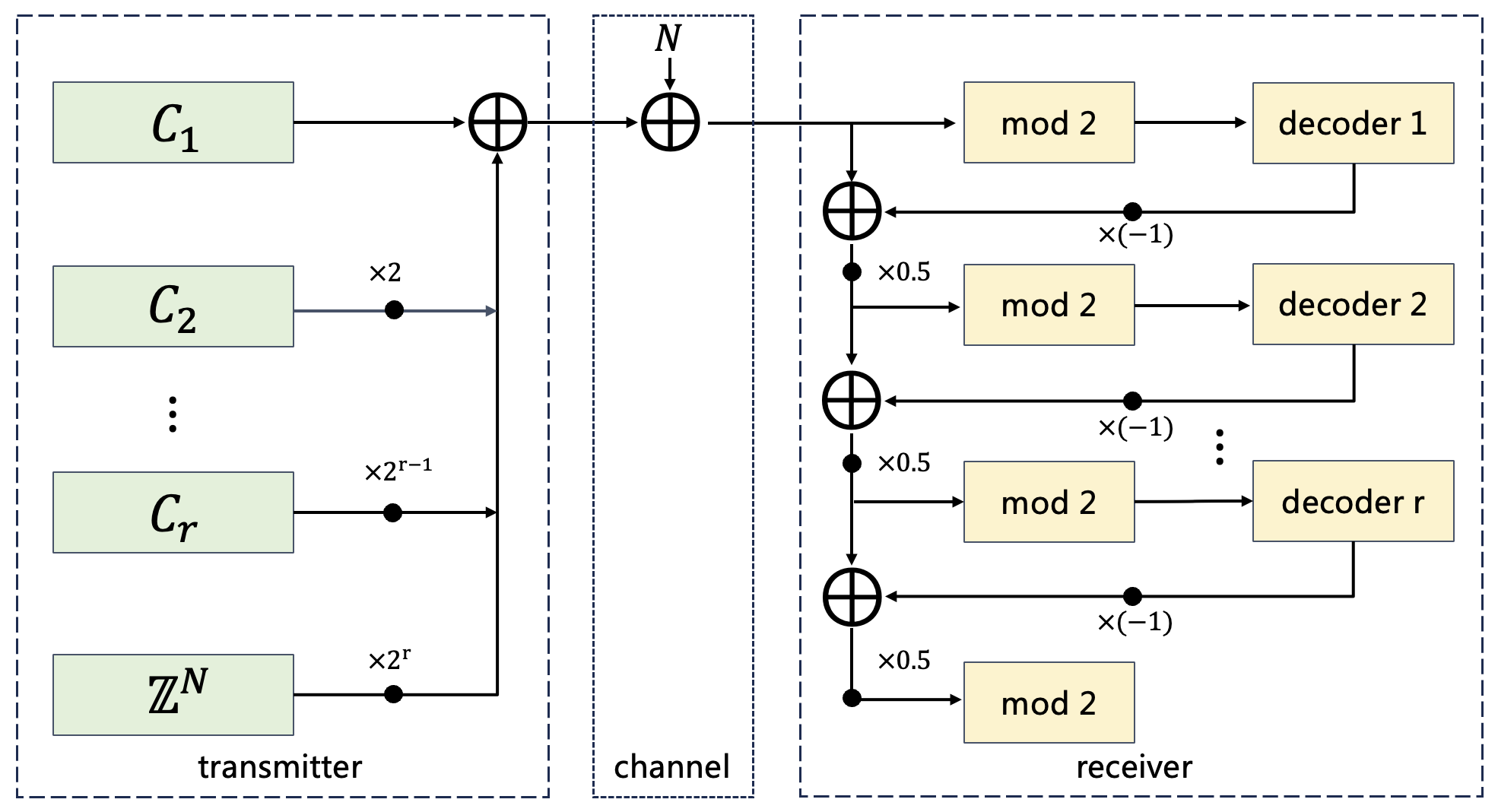}
    \caption{Framework of multilevel lattice encoding and decoding.}
    \label{fig:multilevel-decoding}
\end{figure}


\subsection{Decoding $H_{24}$}


For the principal sublattice $H_{24}$ of the Leech lattice $\Lambda_{24}$, its defining formula is given by
 \begin{align*}
     H_{24}=4\mathbb{Z}^{24}+2(24,23,2)+(24,12,8),
 \end{align*}
where (24, 23, 2) is a single-parity-check code and (24, 12, 8) denotes the extended Golay code.

For the decoding of $H_{24}$, only the receiver module depicted in Fig. \ref{fig:multilevel-decoding} needs to be considered, from which it can be readily concluded that $r=2$. Specifically, decoder 1 corresponds to the decoder for the (24, 12, 8) extended Golay code, and the proposed Parallel Decoding Method can be adopted for this stage; decoder 2 is designed for the (24, 23, 2) parity-check code. It calculates the soft information LLRs based on the input as illustrated in Fig. \ref{fig:multilevel-decoding} and performs hard decision. A parity-check is then conducted on the hard-decision results; if the check fails, the bit with the worst soft information (i.e., the smallest absolute LLR value) is flipped, and the decoded result is then output.

\subsection{Decoding $\Lambda_{24}$}

We use the rotated version $R\Lambda_{24}$ of Leech Lattices, and the following code formula: 
\begin{align*}
    R\Lambda_{24}=8\mathbb{Z}^{24}+4(24,23,2)+2(24,12,8)+(24,1,32)',
\end{align*}
where (24, 23, 2) denotes a parity-check code, (24, 12, 8) represents the extended Golay code and (24, 1, 32)$'$ corresponds to a repetition code. The last row of Leech Lattice's basis contains one ``$-3$'' and twenty-three ``$1$''s, which corresponds to a repetition code after the modulo-2 operation. Fig.~\ref{fig:Leech-matrix} shows the generator matrix of $R\Lambda_{24}$ adapted from \cite{conway1999sphere}.


\begin{figure}[ht]
    \centering
    \includegraphics[width=7.5cm]{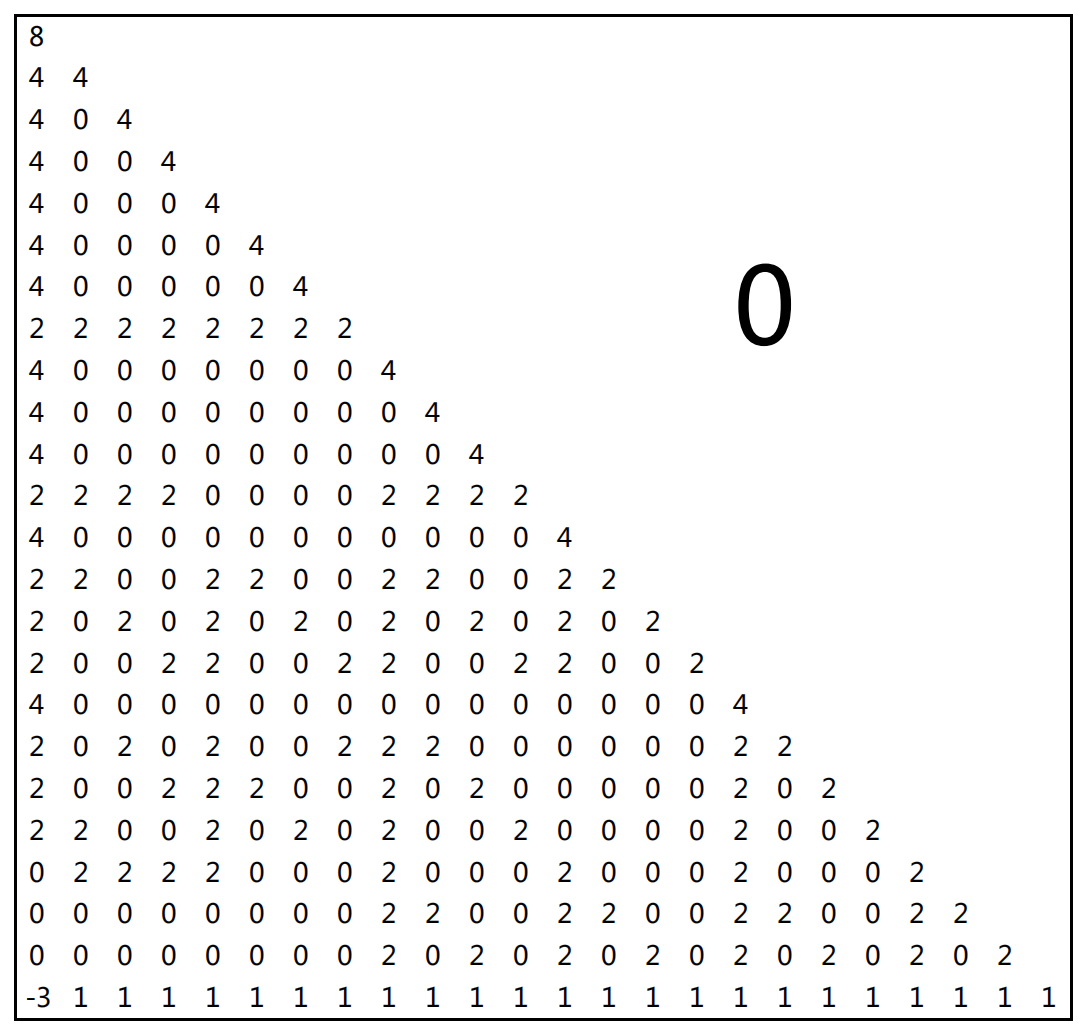}
    \caption{A generator matrix of $R\Lambda_{24}$.}
    \label{fig:Leech-matrix}
\end{figure}

For the decoding of $R\Lambda_{24}$, the same receiver module shown in Fig. \ref{fig:multilevel-decoding} applies, with the partition level readily determined as $r=3$. Specifically, decoder~1 targets the (24, 1, 32)$'$ repetition code and performs decoding via the total sum of LLRs; the detailed procedure is presented in Algorithm~\ref{alg:decoder1}. 
Decoder~2 corresponds to the (24, 12, 8) extended Golay code, where the parallel decoding method is employed; the process is detailed in Algorithm~\ref{alg:decoder2}. 
The matrix $G_{\text{golay}}$ is derived from the generator matrix $G_{\Lambda}$ of $R\Lambda_{24}$~\cite{conway1999sphere}, which is distinct from $G_{24}^{p,(i)}$ for $i=1,2,3$ used in the parallel decoding; hence, a conversion of the results is required. 
Specifically, the encoding relationship satisfies $u_1^{12} G_{24}^{p,(1)} = \hat{u}_1^{12} G_{\text{golay}}$, where $\hat{u}_1^{12}$ denotes the target message. 
To recover $\hat{u}_1^{12}$, we compute a right inverse $R_{\text{golay}}$ satisfying $G_{\text{golay}} R_{\text{golay}} = I_{12}$ through Gaussian elimination. 
Decoder~3 is designed for the (24, 23, 2) parity-check code; the decoding steps are given in Algorithm~\ref{alg:decoder3}. 
The matrix $G_{\text{pc}}$ extracted from $G_{\Lambda}$ is not in systematic form. 
It is transformed into systematic form by matrix $A$ such that $AG_{\text{pc}} = [I_{23}, \mathbf{p}]$, where $\mathbf{p}$ represents the last column of the product. 
The complete decoding procedure for $R\Lambda_{24}$ is described in Algorithm~\ref{alg:decoder-leech}. 
The notation $\lfloor \cdot \rceil$ indicates the rounding function in $\mathbb{Z}$, and the coordinate array of $\boldsymbol{\lambda}$ is defined in~\cite[Definition~3]{liu2025revisit}. Fig.~\ref{fig:multilevel-decoding(bler)} shows the BLER performance of the $R\Lambda_{24}$ using multilevel decoding (Algorithm~\ref{alg:decoder-leech}) with various list sizes $L$. 
The signal-to-noise ratio (SNR) is calculated as $10\log_{10}\left(\frac{1}{\sigma^2}\right)$~dB. $L=16$ suffices for our application, as larger values offer virtually no additional performance benefit.

\begin{algorithm}
\caption{Decoder 1 for the repetition code}
\label{alg:decoder1} 
\begin{algorithmic}
\Require $y_1^{24}$, $\sigma^2$, $G_{\text{rep}}$
\For{$1 \leq i \leq 24$}
    \State $LLR_i \gets f_{\sigma,2\mathbb{Z}}(y_i)$
\EndFor
\If{$\sum_{i=1}^{24}LLR_i \ge 0$}
    \State $u \gets 0$
\Else
    \State $u \gets 1$
\EndIf
\State $x_1^{24} \gets u \times G_{\text{rep}}$
\end{algorithmic}
\end{algorithm}

\begin{algorithm}
\caption{Decoder 2 for the extended Golay code}
\label{alg:decoder2}
\begin{algorithmic}
\Require $y_1^{24}$, $\sigma^2$, $L$, $G_{\text{golay}}$, $R_{\text{golay}}$, $G_{24}^{p,(1)}$

\For{$1 \leq i \leq 24$}
    \State $LLR_i \gets f_{\sigma,2\mathbb{Z}}(y_i)$
\EndFor
\State $u_1^{12} \gets \textbf{Parallel Decoder}         (LLR_1^{24}, L)$
\State $u_1^{12} \gets [u_1^{12} \times G_{24}^{p,(1)} \times R_{\text{golay}}] \ \text{mod} \ 2$
\State $x_1^{24} \gets u_1^{12} \times G_{\text{golay}}$
\end{algorithmic}
\end{algorithm}

\begin{algorithm}
\caption{Decoder 3 for the parity-check code}
\label{alg:decoder3}
\begin{algorithmic}
\Require $y_1^{24}$, $\sigma^2$, $G_{\text{pc}}$, $A$

\For{$1 \leq i \leq 24$}
    \State $LLR_i \gets f_{\sigma,2\mathbb{Z}}(y_i)$
    \If{$LLR_i \ge 0$}
        \State $x_i \gets 0$
    \Else
        \State $x_i \gets 1$
    \EndIf
\EndFor

\If{$[\sum_{i=1}^{24}x_i] \ \text{mod} \ 2 \ne 0 $}
    \State $index \gets \textbf{Min}(|LLR_1^{24}|)$
    \State $x_{index} \gets [x_{index}+1]\ \text{mod} \ 2 $
\EndIf
\State $u_1^{23} \gets [x_1^{23}\times A] \ \text{mod} \ 2$
\State $x_1^{24} \gets u_1^{23}\times G_{\text{pc}}$
\end{algorithmic}
\end{algorithm}








\begin{algorithm}
\caption{Decoder for the $R\Lambda_{24}$}
\label{alg:decoder-leech}
\begin{algorithmic}
\Require $\mathbf{y}$, $\sigma^2$, $L$, $G_{\Lambda}$

\State $G_{\text{rep}} \gets \text{rows of weight }1\text{ in }G_{\Lambda}$
\State $G_{\text{golay}} \gets \text{rows of weight }1,2\text{ in }G_{\Lambda}$  
\State $G_{\text{pc}} \gets \text{rows of weight }1,2,4\text{ in }G_{\Lambda}$

\State \textbf{Step 1: First-level decoding (Repetition code)}
\State $\bar{\mathbf{y}} \gets \mathbf{y} \bmod 2$  
\State $[\mathbf{u}_{1}, \mathbf{x}_1] \gets \textbf{Decoder 1}(\bar{\mathbf{y}}, \sigma^2, G_{\text{rep}})$ 
\State $\mathbf{y} \gets [\mathbf{y}-\mathbf{x}_1] / 2$

\State \textbf{Step 2: Second-level decoding (Golay code)}
\State $\bar{\mathbf{y}} \gets \mathbf{y} \bmod 2$
\State $[\mathbf{u}_2, \mathbf{x}_2] \gets \textbf{Decoder 2}\bigl(\bar{\mathbf{y}}, (\frac{\sigma}{2})^2, L, G_{\text{golay}}\bigr)$ 
\State $\mathbf{y} \gets [\mathbf{y}-2\mathbf{x}_2] / 2$

\State \textbf{Step 3: Third-level decoding (Parity-check code)}
\State $\bar{\mathbf{y}} \gets \mathbf{y} \bmod 2$
\State $[\mathbf{u}_3, \mathbf{x}_3] \gets \textbf{Decoder 3}\bigl(\bar{\mathbf{y}}, (\frac{\sigma}{4})^2, G_{\text{pc}}\bigr)$ 
\State $\mathbf{y} \gets [\mathbf{y}-4\mathbf{x}_3] / 2$

\State \textbf{Step 4: Final Estimation}
\State $\mathbf{z} \gets \lfloor \mathbf{y} \rceil \times G_{\Lambda}^{-1}$
\State $\boldsymbol{\lambda} \gets [\mathbf{u}_1, \mathbf{u}_2, \mathbf{u}_3, \mathbf{z}]$

\end{algorithmic}
\end{algorithm}

\begin{figure}[ht]
    \centering
    \includegraphics[width=7.5cm]{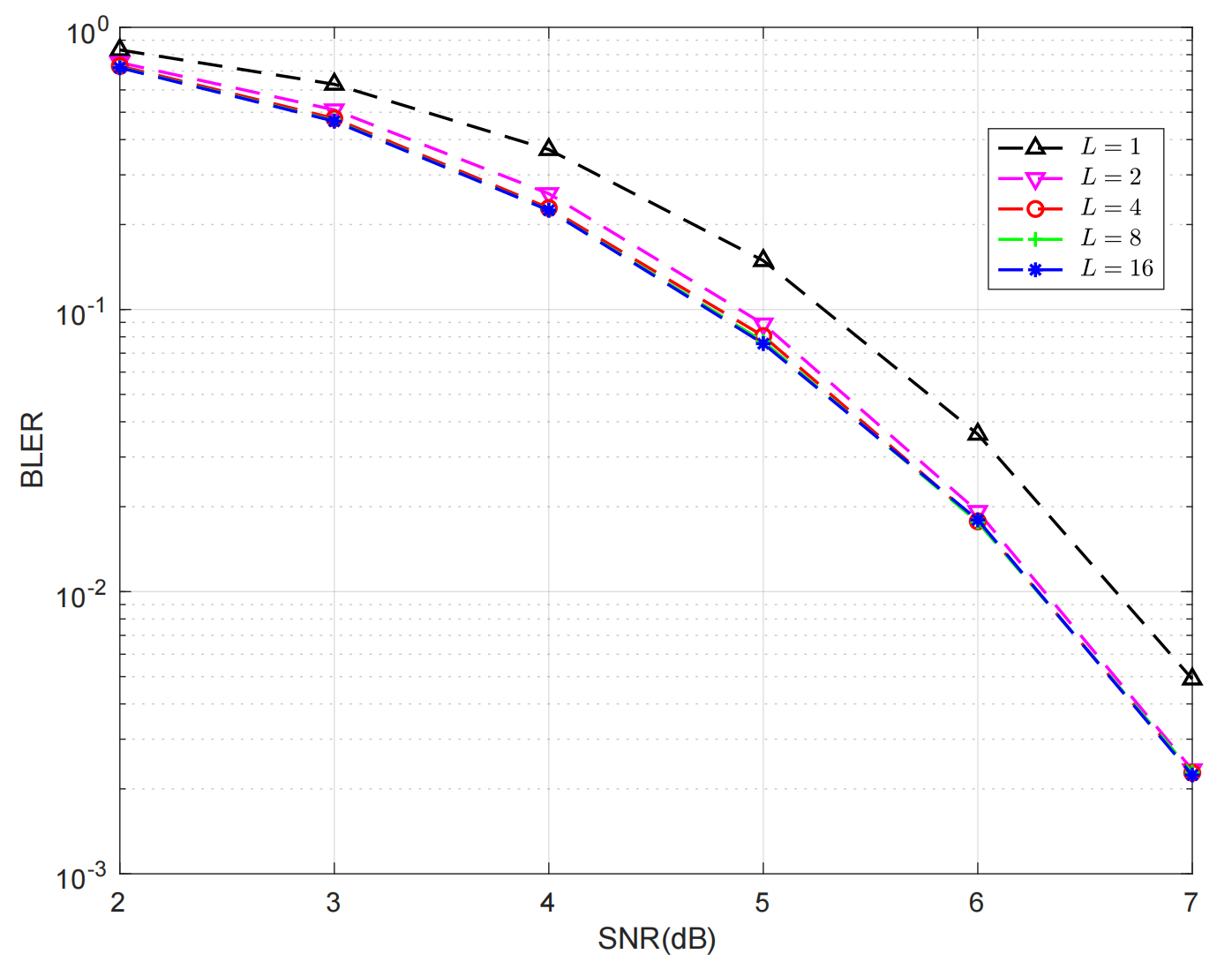}
    \caption{BLER performance of the $R\Lambda_{24}$ under multilevel decoding with different list sizes.}
    \label{fig:multilevel-decoding(bler)}
\end{figure}

In multilevel decoding, the primary challenge lies in the low-level stages, which face Gaussian noise with higher variance. Since layers are decoded sequentially, errors in early stages propagate to subsequent layers and degrade performance. For $R\Lambda_{24}$, an error in the bottom repetition code directly affects the Golay decoder. To improve performance, we introduce level lists between layers, denoted as $L_{\ell}=L_1\times L_2$. Specifically, we retain $L_1\in\{1,2\}$ paths at the first layer and $L_2$ paths at the second, while the third layer is used only for parity checking. Fig.~\ref{fig:multilevel-decoding(bler,level-list)} shows the BLER performance of the $R\Lambda_{24}$ using multilevel decoding with level list sizes $L_{\ell}$. 
With $L_1$ set to $2$, the decoding performance improves significantly. The performance curve nearly saturates as $L_2$ reaches $16$. Therefore, the recommended level list size is $L_{\ell}=2\times 16$. 

\begin{figure}[ht]
    \centering
    \includegraphics[width=7.5cm]{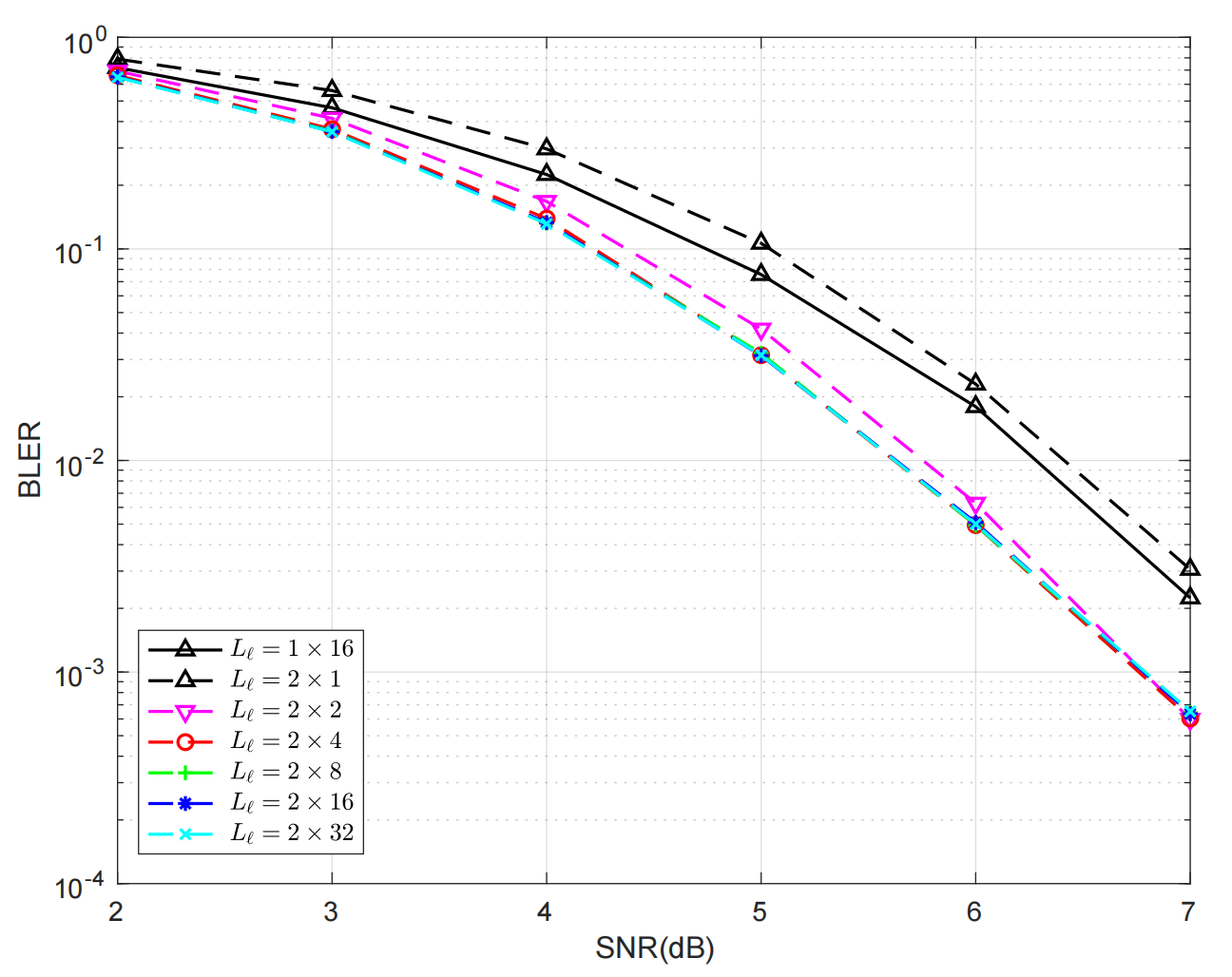}
    \caption{BLER performance of the $R\Lambda_{24}$ under multilevel decoding with different level list sizes $L_{\ell}=L_1\times L_2$.}
    \label{fig:multilevel-decoding(bler,level-list)}
\end{figure}

\section{Conclusion}
In this work, we identify several constructions of Golay codes from a PAC code perspective. Accordingly, SCL decoders for PAC codes can be applied to decode the corresponding Golay codes. Furthermore, based on three distinct PAC code constructions, we propose an efficient parallel list decoding algorithm. Simulation results demonstrate that this decoding algorithm can achieve near-maximum likelihood performance with only a small list size, and the decoding framework is significantly simplified because it eliminates the need for index permutation and codeword puncturing. Additionally, the proposed decoding method enables efficient decoding of related lattices like $\Lambda_{24}$ and $H_{24}$. Our future research direction is to conduct a detailed comparison of performance and complexity with existing decoding algorithms for Leech lattices and Golay codes \cite{vardy2002even}.





%
\bibliographystyle{IEEEtran}
\bibliography{ref}

\end{document}